\newcommand\semihuge{\@setfontsize\semihuge{22.3}{22}}
\def\underbracex#1#2{\mathop{\vtop{\m@th\ialign{##\crcr
				$\hfil\displaystyle{#2}\hfil$\crcr
				\noalign{\kern3\p@\nointerlineskip}%
				#1\crcr\noalign{\kern3\p@}}}}\limits}
\def\upbracefilla{$\m@th \setbox\z@\hbox{$\braceld$}%
	\bracelu\leaders\vrule \@height\ht\z@ \@depth\z@\hfill 
	\kern\p@\vrule \@width\p@\kern\p@\vrule \@width\p@\kern\p@\vrule \@width\p@
	$}
\def\upbracefillb{$\m@th \setbox\z@\hbox{$\braceld$}%
	\vrule \@width\p@\kern\p@\vrule \@width\p@\kern\p@\vrule \@width\p@\kern\p@
	\leaders\vrule \@height\ht\z@ \@depth\z@\hfill\bracerd
	\braceld\leaders\vrule \@height\ht\z@ \@depth\z@\hfill
	\kern\p@\vrule \@width\p@\kern\p@\vrule \@width\p@\kern\p@\vrule \@width\p@
	$}
\def\upbracefillc{$\m@th \setbox\z@\hbox{$\braceld$}%
	\vrule \@width\p@\kern\p@\vrule \@width\p@\kern\p@\vrule \@width\p@\kern\p@
	\leaders\vrule \@height\ht\z@ \@depth\z@\hfill
	\kern\p@\vrule \@width\p@\kern\p@\vrule \@width\p@\kern\p@\vrule \@width\p@
	$}
\def\upbracefilld{$\m@th \setbox\z@\hbox{$\braceld$}%
	\vrule \@width\p@\kern\p@\vrule \@width\p@\kern\p@\vrule \@width\p@\kern\p@
	\leaders\vrule \@height\ht\z@ \@depth\z@\hfill\braceru$}
\def\upbracefillbd{$\m@th \setbox\z@\hbox{$\braceld$}%
	\vrule \@width\p@\kern\p@\vrule \@width\p@\kern\p@\vrule \@width\p@\kern\p@
	\bracerd\braceld
	\leaders\vrule \@height\ht\z@ \@depth\z@\hfill\braceru$}
\newtheorem{theorem}{\bf Theorem}
\newtheorem{definition}{\bf Definition}
\begin{document}
	%
	\title{\semihuge{Interdependence-Aware Game-Theoretic Framework for Secure Intelligent Transportation Systems}}

	\author{\IEEEauthorblockN{\normalsize Aidin Ferdowsi, \emph{Student Member, IEEE}, Abdelrahman Eldosouky, \emph{Student Member, IEEE}, and	Walid Saad, \emph{Fellow, IEEE}\vspace{-8mm}}
		
	\thanks{Authors are with Wireless@VT, Bradley Department of Electrical and Computer Engineering, Virginia Tech, Blacksburg, VA, USA, {\tt\small \{aidin, iv727, walids\}@vt.edu}. The first two authors contributed equally to this work.}}
\maketitle


	%
	\IEEEpeerreviewmaketitle
	
\begin{abstract}	
	
	The operation of future intelligent transportation systems (ITSs), communications infrastructure (CI), and power grids (PGs) will be highly interdependent.
	In particular, autonomous connected vehicles require CI resources to operate, and, thus, communication failures can result in non-optimality in the ITS flow in terms of traffic jams and fuel consumption. Similarly, CI components, e.g., base stations (BSs) can be impacted by failures in the electric grid that is powering them. Thus, malicious attacks on the PG can lead to failures in both the CI and the ITSs. To this end, in this paper, the security of an ITS against indirect attacks carried out through the PG is studied in an interdependent PG-CI-ITS scenario. In the considered scenario, an attacker can induce non-optimality in the ITS or disrupt a particular set of streets by attacking the PG components while remaining stealthy from the ITS administrators. To defend against such attacks, the administrator of the interdependent critical infrastructure can allocate backup power sources (BPSs) at every BS to compensate for the power loss caused by the attacker. However, due to budget limitations, the administrator must consider the importance of each BS in light of the PG risk of failure, while allocating the BPSs. In this regard, a rigorous analytical framework is proposed to model the interdependencies between the ITS, CI, and PG. Next, a one-to-one relationship between the PG components and ITS streets is derived in order to capture the effect of the PG components' failure on the optimality of the traffic flow in the streets. Moreover, the problem of BPS allocation is formulated using a Stackelberg game framework and the Stackelberg equilibrium (SE) of the game is characterized. Simulation results show that the derived SE outperforms any other BPS allocation strategy and can be scalable in linear time with respect to the size of the interdependent infrastructure.

\end{abstract}
\begin{IEEEkeywords}
Intelligent Transportation Systems, Communications Infrastructure, Power Grids, Interdependence, Security, Game Theory, Stackelberg Equilibrium
\end{IEEEkeywords}
\section{Introduction}

Intelligent transportation systems (ITSs) are complex systems that integrate connectivity, sensing, and autonomy to improve the efficiency and the security of traditional transportation systems \cite{FerdowsiITS}. Different devices and sensors are connected in an ITS to collect, share, and process data of the vehicles and their surroundings. This information exchange requires a reliable communication infrastructure to connect the various vehicles and to transfer the data in real time\cite{saad2019vision}. With its ability to connect ubiquitous devices to the Internet, the Internet of Things (IoT) is seen as a major enabler for future ITSs \cite{Zanella,Chen2014,mozaffari2018beyond,zeng2019joint}. 

As one of the main ITS components, autonomous connected vehicles (ACVs) will receive control signals from the communication infrastructure (CI)'s components such as the base stations (BSs) through vehicle-to-everything (V2X) links. These control signals can help to optimize the operation of the ITS in terms of flow, fuel consumption, and air pollution\cite{Kargl}. However, this reliance on wireless connectivity brings forward new vulnerabilities because of the interdependence that exists between the CI and the ITS\cite{Kargl}. For instance, an ITS attacker can utilize the CI to jam road segments\cite{alpcan2010security}, deny V2X signal services \cite{Lyamin}, tamper with traffic signals \cite{Laszka}, and take the control of ACVs\cite{ferdowsicps}.

Furthermore, since the CI is operated using power grids (PGs), another interdependence relation comes into existence between both infrastructure.
Consequently, ITSs and PGs will also be interdependent through the common CI.
In this regard, failures in a PG will directly affect the CI as well as having indirect effects on the ITS.
Thus, the administrator(s) of such critical infrastructure systems must account for the CI-PG-ITS interdependence, when securing their ITSs. This interdependence exposes the ITS to a new set of attacks such as bad data injection \cite{liu2011false} or physical attacks such as tampering the PG components\cite{he2016cyber}. 

CI-PG-ITS interdependence adds new constraints to the security designs of the ITSs against cyber-physical attacks. For example, a disruption of power delivery from the PG to the CI can deactivate the BSs which send control packets to ACVs. This, in turn, will disrupt the traffic flow in the ITS. Therefore, compensating the power loss at BSs by, e.g., using backup power sources (BPSs) \cite{Wang2019}, needs to consider the impact of the BSs on the ITS traffic flow. However, as the available budgets are usually limited, the administrator of an ITS will need to prioritize the PG components, in the security design, in light of the interdependencies between the ITS, the CI, and the PG. However, this type of interdependencies makes designing security solutions for such systems, highly challenging. 

\subsection{Related Works}

 Critical Infrastructure protection and security has recently attracted significant attention \cite{jamei2016micro} and \cite{eldosouky2015contract}. In general, critical infrastructure systems refer to the systems that are vital to modern day economies and cities. Examples of such systems include power grids, transportation systems, nuclear reactors, communications infrastructure, water
supply, and financial services \cite{keeney2005insider}. Therefore, securing and maintaining the proper operation of such systems is of utmost priority. However, one challenge to the security of critical infrastructure systems stems from their interdependent nature, i.e., the functionality of one infrastructure can depend on one or more other infrastructure. Therefore, the failure of one infrastructure can affect other dependent infrastructure systems.

The security of \emph{interdependent critical infrastructure (ICI)} has thus been the focus of many recent works \cite{Rahnamay,Parandehgheibi2014,Das2014,Chen2018,Ferdowsi2017}. For instance, the authors in \cite{Rahnamay} solved a power flow optimization problem for interdependent PG-CI that takes into account the power requirements of the CI and the impact of the CI on the PG state estimation. The work in \cite{Parandehgheibi2014} developed a power load control policy for PGs that mitigates the cascading failures of interdependent PG-CI while the authors in \cite{Das2014} analyzed the root cause of failures in ICIs. Furthermore, the work in \cite{Chen2018} studied robustness of large-scale interdependent CI-PG via a complex network analysis. In addition, the authors in \cite{Ferdowsi2017} and \cite{FerdowsiCBG} proposed a game-theoretic security solution against data injection attacks on CI components that impact PGs.

Moreover, the security of connected vehicles has gained a lot of attention recently due to the important role of ACVs in the ITSs \cite{FerdowsiITSC,Lei,Chetlur,Zheng2017}. The authors in \cite{FerdowsiITSC} proposed a deep reinforcement learning algorithm that makes ACVs robust against cyber attacks on the CI. The work in \cite{Lei} introduced a blockchain-based trust management mechanism for interdependent CI-ITSs that takes into account the geographical layout of the ITS networks. Furthermore, the authors in \cite{Chetlur} modeled the interdependencies between a CI and an ITS using a Poisson line process. Meanwhile, in \cite{Zheng2017}, the authors studied the cyber attacks on intersection controllers of an ITS using game theory.

However, the works in \cite{Rahnamay,Parandehgheibi2014,Das2014,Chen2018,Ferdowsi2017,FerdowsiCBG,FerdowsiITSC,Lei,Chetlur,Zheng2017} did not consider stealthy attacks on the ICI in which the attacker aims at disturbing the ICI while staying stealthy from detection. Furthermore, the works in \cite{Rahnamay,Parandehgheibi2014,Das2014,Chen2018,Ferdowsi2017,FerdowsiCBG} only studied PG-CI interdependencies while \cite{FerdowsiITSC,Lei,Chetlur,Zheng2017} focused solely on the CI-ITS interdependencies and there has been no study on the indirect but pronounced effects of PG failures on ITSs. Although the work in \cite{ferdowsi2017colonel} considers indirect dependencies of critical infrastructure such as gas and water networks on a CI through a PG, however, the attack model therein did not consider any stealthiness for the attacker.

\subsection{Contributions}
The main contribution of this paper is, thus, a holistic game-theoretic framework that analyzes the security of interdependent PG-CI-ITS infrastructure system.
The proposed framework addresses the security of ITSs against indirect attacks carried out through the PGs as these attacks have direct effects on the CI and indirect effects on the ITSs.
In particular we have the following key contributions:
\begin{itemize}
	\item We develop a novel model for capturing the interdependencies between PGs, CI, and ITSs. In particular, we analytically derived the interdependence relations across the three infrastructure: PG, CI, and ITS through formulating a two-tier model for the ITS-CI interdependence as well as the CI-PG interdependence.
	\item Combining these two interdependence models, we derive the full PG-CI-ITS interdependence as a one-to-one mapping that captures the effect of PG components on the ITS operation.
	These one-to-one relations can be used by ICI administrators, when securing their interdependent infrastructure, to prioritize the PG components based on their ultimate effect on the ITS operation.
	\item We model the interactions between a stealthy attacker and the administrator of an interdependent PG-CI-ITS system, using game theory. In particular, we formulate a Stackelberg game to model such interactions in which the attacker acts as a follower whose goal is to disrupt the ITS flow through attacking the PG components. The ICI administrator acts as a leader whose goal is to minimize such disruption by maintaining the optimal operation of its CI. In this game, the defender uses the interdependence model to allocate the BPSs to its CI to maintain the ITS operation.
	\item We analytically derive the Stackelberg equilibrium (SE), for the proposed game, which is used to characterize the optimal attack and defense mechanisms. We show that the derived SE strategy, for the administrator, is scalable in linear time, and, thus it is practical for large-scale ICI implementations.
	\item Through simulations, we show that the proposed SE strategy can outperform any other security strategy for protecting ICIs.
\end{itemize}

The rest of the paper is organized as follows. Individual infrastructure models and the interdependence models are presented in Section \ref{sec:model}. The attacker’s stealthy model and the optimal defense strategy are derived in Section \ref{sec:attackdefense}. The proposed Stackelberg game between the attacker and the ICI administrator is formulated in Section \ref{sec:game} where the equilibrium solutions are also derived. Simulation results are shown in Section \ref{sec:sim}. Finally, conclusions are drawn in Section \ref{sec:conc}.

\section{System Model}\label{sec:model}

\subsection{Individual Infrastructure Models}

\subsubsection{ITS Model}
Consider an ITS that is modeled by a set $\mathcal{N}$ of $N$ intersections. This ITS has three main macroscopic characteristics at each street $ ij $ (direction of movement is from intersection $ i $ to intersection $ j $) \cite{daganzo1997fundamentals}:
\begin{itemize}
	\item \emph{Flow}, $ q_{ij}(t) $, which is the number of ACVs passing street $ ij $ over a given period of time (expressed in veh/h/lane)
	\item \emph{Density}, $ k_{ij}(t) $, which is the number of ACVs moving in street $ ij $ at a specific instant in time (expressed in veh/km/lane)
	\item \emph{Space-mean-speed}, $ v_{ij}(t) $, which is the average rate of motion for vehicles moving in street $ ij $ (expressed in km/h).
\end{itemize}

In an optimal ITS which takes into account minimum travel time, maximum safety, and minimum air pollution, every street $ ij $ is designed to have an optimal flow, $ \bar{q}_{ij} $. Moreover, let $\mathcal{O}_i$ and $\mathcal{I}_i$ be the set of streets that have flow from and to intersection $i$. Then, at every intersection $i$, we have:
\begin{align}\label{eq:intersection}
\bar{q}_{ji} = \sum_{k \in \mathcal{O}_i, k \neq j}a_{ik}\bar{q}_{ik}, \forall j \in \mathcal{I}_i,
\end{align}
where $ a_{ik} $ is the portion of flow $ \bar{q}_{ik} $ that comes from $ \bar{q}_{ji} $.

In fact, \eqref{eq:intersection} captures the fact that the inflow from every street to an intersection is divided into outflows from that intersection. Note that we ignore u-turns at intersections since typically u-turns are a small fraction of the through traffic. We define an $ n\times n $ network flow matrix, $ \boldsymbol{Q} $, such that the element at row $i$ and column $j$ is $a_{ij} $. Therefore, in order to find the optimal values for the traffic flow at every street we need to solve:
\begin{align}\label{eq:ITSflows}
	\boldsymbol{I} \boldsymbol{q} = \boldsymbol{Q} \boldsymbol{q},
\end{align}
where $ \boldsymbol{I} $ is an $ n\times n $ identity matrix and $ \boldsymbol{q} $ is an $ n\times 1 $ vector that contains the flow rate values of all of the streets. Equation \eqref{eq:ITSflows} can be written as $(\boldsymbol{I}-\boldsymbol{Q})\boldsymbol{q} = \boldsymbol{0}$ or $\boldsymbol{A}\boldsymbol{q} = \boldsymbol{0}$, where $ \boldsymbol{A} \triangleq \boldsymbol{I}-\boldsymbol{Q} $. $ \boldsymbol{A} $ can be proven to be under-determined since its rank is $ n-1 $\cite{Harrod1984,Zhou}. Therefore, in order to solve \eqref{eq:ITSflows}, we will assume to know the traffic flow value at least in one street and, then, we can define an $ n\times n-1 $ matrix $A_i$ that is identical to $ \boldsymbol{A}$ with the $ i $-th column removed. We also define $ \boldsymbol{a}_i $ as the $ i $-th column of matrix $ \boldsymbol{A} $.

Next, the values of the flow in the remaining streets can be calculated by solving the equation:
\begin{align}\label{eq:reducedflow}
	\boldsymbol{A}_i \boldsymbol{q}_{\hat{i}} = -\boldsymbol{a}_i q_i,
\end{align}
where $ \boldsymbol{q}_{\hat{i}} $ is an $n-1 \times 1$ vector containing all the values of street flows other than street $ i $ and $ q_i $ is the value of the known street $ i $'s traffic flow. The solution of \eqref{eq:reducedflow} can be found by\cite{Harrod1984}:
\begin{align}\label{eq:reducedflowsol}
	\boldsymbol{q}_{\hat{i}} = -  (\boldsymbol{A}_i^{T}\boldsymbol{A}_i)^{-1}\boldsymbol{A}_i^{T}\boldsymbol{a}_i q_i.
\end{align}

To this end, we can find the effect of flow deviation at the $ i $-th street on the entire ITS using \eqref{eq:reducedflowsol} . Let $ \boldsymbol{\delta}_i $ be the flow deviation at street $ i $ and $ \tilde{\boldsymbol{q}}_{\hat{i}} $ be the flow rate of other streets following a deviation at street $i$, then, we have:
\begin{align}
	\tilde{\boldsymbol{q}}_{\hat{i}} &= -  (\boldsymbol{A}_i^{T}\boldsymbol{A}_i)^{-1}\boldsymbol{A}_i^{T}\boldsymbol{a}_i (q_i - \delta_i)\nonumber\\
	&= -  (\boldsymbol{A}_i^{T}\boldsymbol{A}_i)^{-1}\boldsymbol{A}_i^{T}\boldsymbol{a}_i q_i + (\boldsymbol{A}_i^{T}\boldsymbol{A}_i)^{-1}\boldsymbol{A}_i^{T}\boldsymbol{a}_i\delta_i\nonumber\\
	& = \boldsymbol{q}_{\hat{i}} + (\boldsymbol{A}_i^{T}\boldsymbol{A}_i)^{-1}\boldsymbol{A}_i^{T}\boldsymbol{a}_i\delta_i.\label{eq:deviation}
\end{align}

Thus, the flow deviation $ \boldsymbol{\delta}_{\hat{i}} $ on streets other than street $ i $, can be derived by solving $ \boldsymbol{\delta}_{\hat{i}} \triangleq \tilde{\boldsymbol{q}}_{\hat{i}} - \boldsymbol{q}_{\hat{i}} =  (\boldsymbol{A}_i^{T}\boldsymbol{A}_i)^{-1}\boldsymbol{A}_i^{T}\boldsymbol{a}_i\delta_i.$ 

This result represents the first step in modeling the interdependence in PG-CI-ITS as it allows an administrator to fully understand the non-optimality in the flow of its entire ITS system. Next, we study the model of the CI in light of its connection to the ITS.

\subsubsection{CI Model}
In our model, the ACVs of the ITS are powered by a CI that consists of a set $\mathcal{B}$ of $B$ base stations. Each BS covers a portion of streets and communicates important control messages to the ACVs within the covered areas. As is customary in cellular networks, we use a hexagonal shape to model the coverage area of every BS as shown in Fig. \ref{fig:model}. In Fig. \ref{fig:model}, we can see that every BS can cover the entirety of a street or a section of every street. Under normal operating conditions, every BS is expected to service all the ACVs in its coverage region.

Next, we study the operation of a BS, and its impact on the ITS, in case there is a disruption in the delivered electrical power.
We define $ p_b^o $ as the power required by every BS $ b $, in order to be activated \cite{conte2012power}. Similarly, we define $ p_b^t $ as the power required by a BS $ b $ to send control packets to $100\%$ of the ACVs in its cell, as shown in Fig. \ref{fig:bspower}. Clearly, if the BS received only $ p_b^o $, it will not be able to serve any ACVs.

Then, consider that BS $ b $ received a power $ p_b^r $ such that $ p_b^o\leq p_b^r \leq p_b^t $, then, BS $ b $ will be able to send packets to only a fraction of the ACVs in its cell. From Fig. \ref{fig:bspower}, this fraction of users can be given as:
\begin{equation}
x_b  = \frac{p_b^r -  p_b^o }{p_b^t - p_b^o},
\end{equation} 
where $x_b$ is the fraction of users that can be covered by BS $ b $.

Next, let $ p_b^a $ be the power deviation at BS $ b $ such that $  p_b^a = p_b^t - p_b^r $. Then the relationship between the percentage of users that receive packets from BS $ b $ and the power deviation $ p_b^a $ can be shown as in Fig. \ref{fig:powervsuser}. From Fig. \ref{fig:powervsuser}, we can see that, if the supplied power deviates by more than $ p_b^t - p_b^0 $, then, the BS cannot send packets to any of the users.

\begin{figure}[!t]
	\centering
	\includegraphics[width=\columnwidth]{./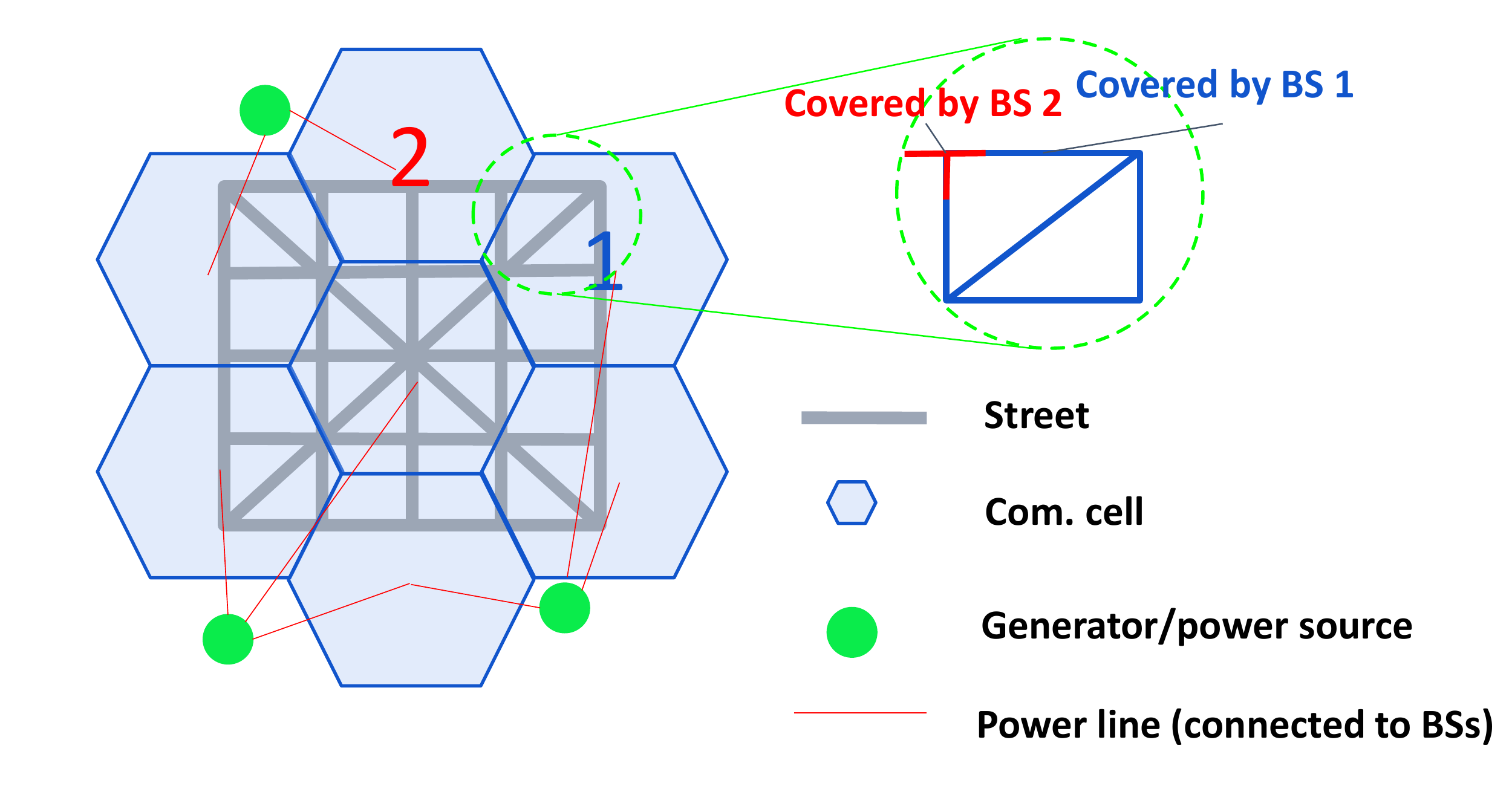}
	\caption{An illustrative example of an ITS and a CI.}
	\label{fig:model}
\end{figure}

\begin{figure}[!t]
	\centering
	\begin{tikzpicture}
		\draw[thick,->] (0,0) -- (3.5,0) node[anchor=north west] at(0.5,-0.35) {Cell load (\%)}
		node at (3,-0.25) {$100$}
		node at (0,-0.25) {$0$} ;
		\draw[thick,->] (0,0) -- (0,3.5)
		node at (-0.25,1.5) {$p_b^o$} 
		node[anchor=south east,rotate = 90] at (-0.5,3.5) {Power consumption (W)};
		\filldraw[fill=blue!40!white, draw=black] (0,1.5) -- (3,3) -- (3,0) -- (0,0);
		\draw[dashed] (0,3) -- (3,3)
		node at (-0.25,3) {$p_b^t$} ;
	\end{tikzpicture}
	\caption{Power consumption diagram of every BS $ b $.}
	\label{fig:bspower}
\end{figure}
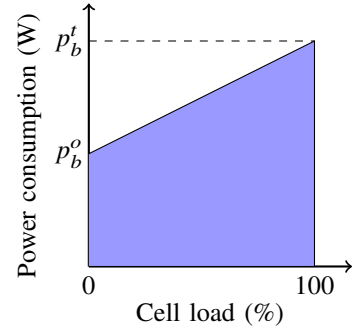

\begin{figure}[!t]
	\centering
	\begin{tikzpicture}
	\draw[thick,->] (0,0) -- (3.5,0) node[anchor=north west] at(0.5,-0.35) {Power deviation (W)}
	node at (1.5,-0.25) {$p_b^t - p_b^o$} ;
	\draw[thick,->] (0,0) -- (0,3.5)
	node[anchor=south east,rotate = 90] at (-0.5,3.5) {User coverage (\%) };
	\filldraw[fill=red!60!white, draw=black] (0,3) -- (1.5,0) -- (0,0)
	node at (-0.3,3) {$100$}	
	node at (-0.3,0) {$0$} ;
	\end{tikzpicture}
	\caption{Percentage of user coverage inside a cell as a function of deviation on the BS received power.}
	\label{fig:powervsuser}
\end{figure}
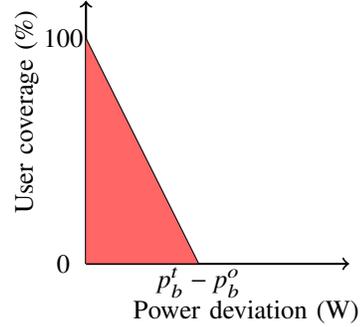

To this end, the previous analysis of the CI allows us to determine the percentage of the affected ACVs in case of power disruption. Next, we study the the operation of the PG in light of its connections to the CI.

\subsubsection{PG Model}
The BSs in our CI are powered by a PG. In particular, we model this PG by a graph $\mathcal{P}=\{ \mathcal{V},\mathcal{E}\}$ in which $\mathcal{V}$ is the set of power nodes and $\mathcal{E}$ is the set of connection lines between these nodes. We consider two types of power nodes: Power generators and loads. Note that, in a typical power grid, there can be other types of nodes such as transmission units and substations. However, we do not model these units explicitly as nodes in the graph $\mathcal{G}$ as they do not have direct effect on the communication network.

Since we are interested in the power received at the BSs, we consider them in more detail as being load nodes in the power grid.
The effect of the BSs failure can be included as part of the load nodes failure. Thus, let $\mathcal{G}$ be the set of $G$ generators, $\mathcal{L}$ be the set of all load nodes, and $\mathcal{Q}$ be the set of $Q$ non-BS loads. Thus, we have $\mathcal{L} = \mathcal{Q} \cup \mathcal{B}$, where $\mathcal{B}$ is the set of base stations as defined earlier. We also have $\mathcal{V} = \mathcal{G} \cup \mathcal{L}$ as the set of all nodes include all the generators and all the load nodes.

Since we are interested in studying the dependency between the PG and the CI, it is important to highlight the effect of power generation failures on CI. In a typical PG, electricity is generated to match the power consumption which is known as demand-response~\cite{palensky2011demand}. As it is hard and inefficient to store electricity, the power grid uses a means of communication to organize the generation capacity of each power generation unit. This management is also useful in case one generator fails so that its planned load can be shifted to other generation units to meet the power consumption demand~\cite{eldosouky2017resilient}. The relation between the power grid failures and the CI, is studied next.

\begin{figure*}
	\includegraphics[width=\textwidth]{./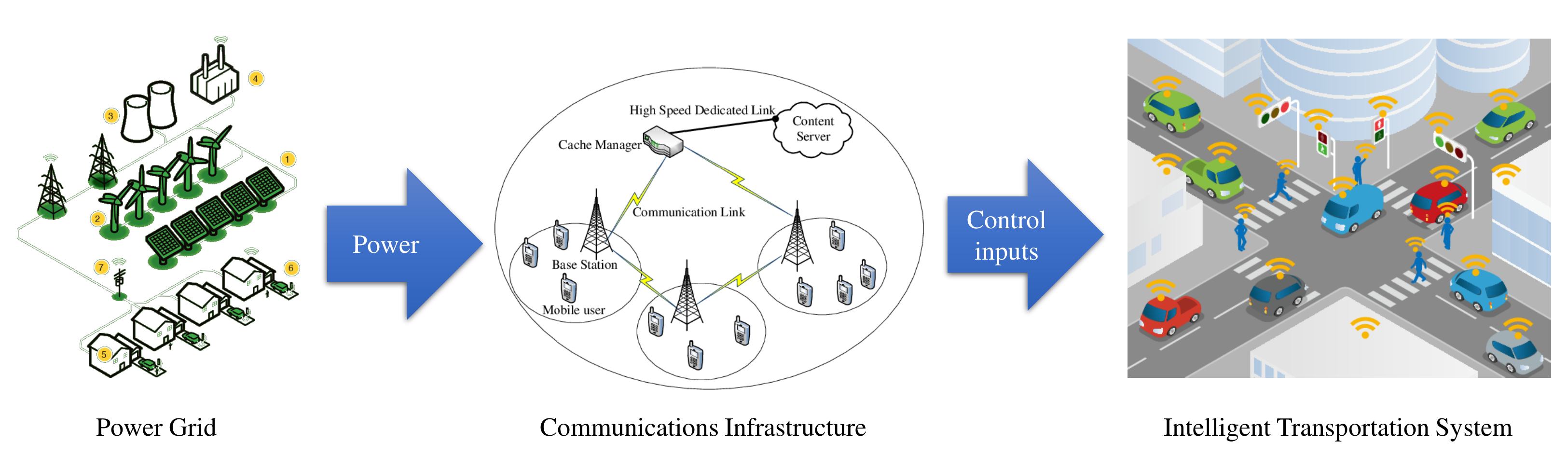}
	\caption{An illustration of the interdependencies between ITS, CI, and PG.}
	\label{Fig:inter}
\end{figure*}

\subsection{Interdependence Analysis}

In Fig. \ref{Fig:inter}, we illustrate the interdependencies in our system. From Fig. \ref{Fig:inter} we can see that the power grid provides the essential electricity required for the proper operation of the BSs in the CI. The BSs in turn use this power to transmit the control signals to the ITS. Therefore, in case a failure occurs in any of the power lines, some BSs might not receive their required power. As a result, they will fail to cover some portions of the ITS. Consequently, some ACVs in the ITS will not receive the control packets from the CI which will lead to a non-optimal traffic flow in the ITS. 

In such a scenario, it can be said that the operation of the CI is directly dependent on the functionality of the PG while the operation of the ITS is indirectly dependent on the PG, through the CI. As a result, next, we propose a two-tier model to capture the interdependence within a PG-CI-ITS system. In particular, we will use the empirical, agent-based and flow-based methods to model the interdependencies between the ITS and CI in the first tier. Then, we will use the a network-based model to study the interdependence between the CI and the PG \cite{OUYANG201443}, in the second tier. 

\subsubsection{Interdependence of ITS and CI}
In our ITS model, every street or every section of each street is covered by a specific BS. This makes the ITS vulnerable against possible failures in the CI whenever a portion of BSs are either deactivated or do not have enough power to send control packets to all of the ACVs in streets. In this case, the flow at every street might deviate from its optimal value. In particular, if a BS $ b $ can only send control packets to a fraction $x$ of its covered street $ i $, then the flow reduction at street $ i $ will be $ -x \frac{s_{ib}}{l_i} \delta $, where $s_{ib}$ is the length of street $ i $ covered by BS $ b $, $ l_i $ is the length of street $ i $, and $ \delta $ is a fixed number indicating the difference between the flow of a street that is $1$ km long before and after deviation.

Such flow deviation can propagate forward and backward as time goes and can essentially affect the entire ITS. Thus, we define a matrix $\boldsymbol{C}$ in which each element at row $i$ and column $b$, $c_{ib}$ equals:
\[
c_{ib} = \begin{cases}
\frac{s_{ib} }{l_i} & \text{if BS $ b $ covers $s_{ib}$ of street $i$,}  \\
0 & \text{otherwise.}
\end{cases}
\]

Then, if the received power at a BS $ b $ is reduced by $ x $, such that $ p_b^o\leq x \leq p_b^t $, then its effect on the flow of street $ i $ is $ -\frac{x}{p_b^t - p_b^o} \frac{s_{ib}}{l_i} \delta $, and, thus, the flow deviation will propagate to the entire ITS which can be derived using \eqref{eq:deviation} by:
\begin{equation} 
\boldsymbol{\delta}_{\hat{i}}^b =  (\boldsymbol{A}_i^{T}\boldsymbol{A}_i)^{-1}\boldsymbol{A}_i^{T}\boldsymbol{a}_i\frac{x}{p_b^t - p_b^o} \frac{s_{ib}}{l_i}\delta.
\end{equation}

Then, if we concatenate -1 at the $ i $-th row of $(\boldsymbol{A}_i^{T}\boldsymbol{A}_i)^{-1}\boldsymbol{A}_i^{T}\boldsymbol{a}_i$ and shift all the rows after row $ i $ one row down, we will construct a new vector $ \boldsymbol{e}_{i}^b $ that has $ n $ rows and shows the effect of failure of BS $ b $ on the ITS through direct impact on street $i$.

Therefore, the total impact of a power deviation $ x $ at BS $ b $ on ITS can be written as follows:
\begin{align}\label{eq:impact}
	\boldsymbol{e}^b(x) = \underbrace{\left(\sum_{i \in \mathcal{S}_b}  \boldsymbol{e}_{i}^b \frac{s_{ib}}{l_i(p_b^t - p_b^o)}\right) }_{\boldsymbol{z}^b}\delta x = \boldsymbol{z}^b\delta x,\,\, \forall x \in \left[0,p_b^t-p_b^o\right],
\end{align}
where set $ \mathcal{S}_b $ is the set of all the streets covered by BS $ b $.
In \eqref{eq:impact}, the $ i $-th element $ z_i^b $ of $ \boldsymbol{z}^b $ captures the \emph{one-to-one impact of BS $ b $ on the street $ i $}.

Note that, in order to compare the value of every BS with that of other BSs, we can use the $ \mathcal{L}^1 $-norm of $ \boldsymbol{z}^b $ which we define it as $ z_b \triangleq \left|\boldsymbol{z}_b\right| $, such that a BS with higher $ z_b $ value will have higher impact on the ITS.

\subsubsection{Interdependence of CI and PG}
The BSs must be connected to the PG to obtain the electricity required for their operation. Thus, we model each BS as a load node in the power grid. As discussed earlier, these BSs need certain power requirements in order for them to operate properly otherwise they will fail\cite{conte2012power}. These power requirements can be satisfied by the power generators that are connected to the BS using transmission lines.

To study the effect of a power generator failure on its connected BSs, we consider the example in Fig. \ref{fig:model}. In Fig. \ref{fig:model}, power sources are represented using circles while the BSs are represented using hexagons. We notice that some BSs can receive electricity directly from multiple generators or indirectly through the transmission lines from other generators. Similarly, each generator is directly connected to some BSs (as loads) and also to the rest of the grid using transmission lines.

Modeling the exact behavior of power generators failure is a complex process as it can involve multiple failures in the grid known as cascading failures~\cite{Parandehgheibi2014}. However, in this work, we are concerned only with the effect on the BSs. We evaluate each power generation based on its failure effect on the connected BSs. The failure here refers to the inability of the power generator to produce the electricity either fully or partially due to any disruptive events such as cyber or physical attacks. In the following, we explain the procedure of evaluating the power generators in case of full failure, i.e., no electricity generation. Partial failures can be modeled in a similar way by considering the affected BSs. 

Let $ \boldsymbol{T} $ be a $ B \times G $ matrix such that the entity at row $ b $ and column $ i $ of $ \boldsymbol{T} $ which we define as $ t^g_b $ is a value in $ [0,1] $ that indicates the portion of $ p_b^r $ received at BS $ b $ from power source $ g $. Therefore, if a power source $ g $ is not connected to BS $ b $, then $ t^g_b = 0 $. When a BS is connected to more than one generator, we will have $ \sum_{g \in \mathcal{G}}t^g_b = 1, \,\, \forall b \in \mathcal{B}$, i.e., the summation over all the connected generators will equal the full power received at a single BS, from these generators. Next, we explain the procedure that an attacker can use to exploit the interdependence in order to perform its stealthy attacks.

\section{ Interdependent PG-CI-ITS Under Attack}\label{sec:attackdefense}
\subsection{Stealthy Attack Model}\label{sec:stealthyattack}
Consider an adversary who aims at disrupting the operation of the ITS by attacking the PG components. Such an attack can target either the generators or the power lines. When the attacker damages a generator, that generator will no longer supply power at full capacity. In addition, the attacker can specifically reduce the power supply at any line by damaging the infrastructure. Doing so, the goal of the attacker can include disrupting the flow of a specific street or the entire ITS. However, defining such a specific goal for the attacker might not be always possible since the attacker may not have access to all the PG components.

Although a higher disruption at the PG will cause a higher flow deviation at the ITS, it will on the other hand expose the attacker to be detected with higher chance. Therefore, we introduce the \emph{stealthiness level} to represent the level at which the attacker risks to be detected, i.e., a risk averse attacker will adopt a higher stealthiness level, and, thus it will perform less attacks in order not to be detected. On the other hand, a risk tolerant attacker will adopt a lower stealthiness level by performing a large scale attack despite the higher chance of being detected. Here, we define the stealthiness level based on the location of the attack as each location can cause a different degree of damage to the system while having a different chance of detection. In particular, we define three levels of stealthiness: a) at the power source level, b) at the power line level, and c) at the BS level.

Let $ p^a_{gb} $ be the power deviation caused by the attacker at the line connecting power source $ g $ to BS $ b $. Then, at a power source $ g $ level, the probability of being detected can be defined as:
\begin{align}\label{eq:stealthygen}
	\pi_g = \frac{\sum_{b \in \mathcal{B}}p^a_{gb}}{\sum_{b \in \mathcal{B}}t^g_b p^t_b},
\end{align}
where the numerator is the total power deviation by the attacker at power source $ g $, and the denominator is the total generated power by the power source $ g $ in a safe scenario.

Similarly, the probability of attack detection at the power line level connecting a power source $ g $ to a BS $ b $ can be defined as:
\begin{align}\label{eq:stealthyline}
	\pi_{gb} = \frac{p^a_{gb}}{t^g_b p^t_b}.
\end{align}

Moreover, the probability of staying stealthy at a BS $ b $ can be defined as:
\begin{align}\label{eq:stealthybs}
	\pi_{b} = \frac{\sum_{g \in \mathcal{G}}p^a_{gb}}{p^t_b - p^o_b},
\end{align}
where the numerator is the total power deviation at BS $ b $ and the denominator is the required power for covering $100\%$ of the users in BS $ b $'s cell.

The probabilities of detection defined in \eqref{eq:stealthygen}, \eqref{eq:stealthyline}, or \eqref{eq:stealthybs} will be utilized in Section \ref{sec:game} to study the attacker's behavior when interacting with a defender adopting our defense strategy discussed next.

\subsection{Defense Strategy}
As a countermeasure, the administrator of the ICI, the \emph{defender} hereinafter, can allocate BPSs at every BS in order to compensate for the power loss at the BS. However, in practice, due to the budget limitation and different impact levels of each BS on the ITS, the administrator must allocate a different amount of BPSs at every BS. Let $ P^d $ be the total available amount of BPSs\footnote{Since the BPSs can be designed to have any desired storage capacity we consider $ P^d $ and $ p^d_b $ to take any positive value.} for the defender and $ p^d_b $ be the allocated BPSs at BS $ b $, then we will have $ \sum_{b \in \mathcal{B}} p^d_b \leq P^d $, i.e., the defender does need to allocate all the BPSs. Therefore, the total power deviation at BS $ b $ can be written as $ \sum_{g \in \mathcal{G}}p^a_{gb} - p_b^d $.

The defender can then evaluate the outcome from allocating each BPS by evaluating the improvement in the ITS due to the compensated power from the BPS. However, the defender's outcome from allocating a BPS will depend on the attacker's choice of PG component. Thus, next we study the outcomes for both players in presence of these interactions.

\subsection{Attacker-Defender Interactions}
We propose to define the defender's payoff from allocating the BPSs at BSs, as a function in both players' actions, as the negative of total flow deviation at the ITS using \eqref{eq:impact} as follows:
\begin{align}\label{eq:defenderpayoff}
	u^d(\boldsymbol{p}^d,\boldsymbol{p}^a) = -\sum_{b \in \mathcal{B}}z_b \left(\sum_{g \in \mathcal{G}}p^a_{gb} - p_b^d\right), 
\end{align}
where $ \boldsymbol{p}^d $ and $ \boldsymbol{p}^a $ are the strategy vectors of the defender and the attacker, respectively. Here, the defender's strategy $ \boldsymbol{p}^d $ represents the allocated BPSs at every BS, while the attacker's strategy $ \boldsymbol{p}^a $ represents the targeted power deviations at every power line. Recall that $z_b $ is given by \eqref{eq:impact}, and it represents the one-to-one impact of the BSs on the ITS. We note that, in \eqref{eq:defenderpayoff}, we dropped $ \delta $ from \eqref{eq:impact} as it is a constant value and will not impact the strategy design. 

We can rearrange \eqref{eq:defenderpayoff} as follows:
\begin{align}
u^d(\boldsymbol{p}^d,\boldsymbol{p}^a) = - \left[\sum_{g \in \mathcal{G}}\sum_{b \in \mathcal{B}}z_bp^a_{gb} - \sum_{b \in \mathcal{B}}z_b p_b^d\right]. 
\end{align} 

Since the defender and the attacker have opposing goals, the defender's loss is considered as the attacker's gain. Therefore, the attacker's payoff will be the negative of the defender's payoff, as follows:
\begin{align}\label{eq:attpayoff}
u^a(\boldsymbol{p}^d,\boldsymbol{p}^a) =  \sum_{g \in \mathcal{G}}\sum_{b \in \mathcal{B}}z_bp^a_{gb} - \sum_{b \in \mathcal{B}}z_b p_b^d. 
\end{align} 

However, \eqref{eq:attpayoff} represents the general outcome of the attacker in case the attack is overt and no stealthiness was adopted. When an attacker performs a stealthy attack, its payoff will depend on the level of stealthiness. In this case, its payoff can be given by:
\begin{align}\label{eq:attpayoff_stealthy}
u^a(\boldsymbol{p}^d,\boldsymbol{p}^a) =  \sum_{g \in \mathcal{G}}\sum_{b \in \mathcal{B}}z_bp^a_{gb}(1-\pi_x) - \sum_{b \in \mathcal{B}}z_b p_b^d, x \in \{g,gb,b\},
\end{align} 
where $x$ represents the stealthiness level at the power source, at the power line, or the at BS. Thus, we can define three different payoff functions based on the three levels os stealthiness defined in Section \ref{sec:stealthyattack}. First, for the power source level stealthiness, we substitute \eqref{eq:stealthygen} in \eqref{eq:attpayoff_stealthy}, so we get:

\begin{align}\label{eq:attpayoffgen}
	u^a_\mathcal{G}(\boldsymbol{p}^a,\boldsymbol{p}^d) = \sum_{g \in \mathcal{G}}\sum_{b \in \mathcal{B}}z_bp^a_{gb} \left(1- \frac{\sum_{b \in \mathcal{B}}p^a_{gb}}{\sum_{b \in \mathcal{B}}t^g_b p^t_b}\right) - \sum_{b \in \mathcal{B}} z_bp_b^d. 
\end{align}

From \eqref{eq:attpayoffgen}, we can see that large values for each $ p_{gb}^a $ will yield a larger flow deviation in the ITS. However, this will result in a higher probability of detection. Therefore, the attacker must choose $ \boldsymbol{p}^a $ such that $ \frac{\sum_{b \in \mathcal{B}}p^a_{gb}}{\sum_{b \in \mathcal{B}}t^g_b p^t_b}\leq  1 , \forall g \in \mathcal{G} $.

Similarly, the attacker's payoff while seeking to remain stealthy at every power line level can be calculated by substituting \eqref{eq:stealthyline} in \eqref{eq:attpayoff_stealthy}, so we get:
\begin{align}\label{eq:attpayoffline}
	u^a_{\mathcal{L}}(\boldsymbol{p}^a,\boldsymbol{p}^d) = \sum_{b \in \mathcal{B}}z_b \left(\sum_{g \in \mathcal{G}}p^a_{gb} \left(1-\frac{p^a_{gb}}{t^g_b p^t_b}\right) - p_b^d\right).
\end{align}

In \eqref{eq:attpayoffline}, we can see that, in order to stay stealthy, the attacker must choose $ \boldsymbol{p}^a $ such that $ \frac{p^a_{gb}}{t^g_b p^t_b}\leq  1$.

Finally, for the case in which the attacker wants to stay stealthy at the BS level, the payoff function can be calculated by substituting \eqref{eq:stealthybs} in \eqref{eq:attpayoff_stealthy}, so we get:
\begin{align}\label{eq:attpayoffbs}
	u^a_{\mathcal{B}}(\boldsymbol{p}^a,\boldsymbol{p}^d) = \sum_{b \in \mathcal{B}}z_b \left(\sum_{g \in \mathcal{G}}p^a_{gb} - p_b^d\right)\left(1-\frac{\sum_{g \in \mathcal{G}}p^a_{gb}}{p^t_b - p^o_b}\right).
\end{align}

As we can see from \eqref{eq:defenderpayoff}, \eqref{eq:attpayoffgen}, \eqref{eq:attpayoffline}, and \eqref{eq:attpayoffbs}, the payoffs are functions of both of the strategies of the defender and the attacker which motivates a game-theoretic approach \cite{han2012game}. Moreover, since BPSs will first be implemented and then the attacker will attack the PG, thus, the defender must choose its best strategy before seeing the attacker's strategy. This scenario can be properly modeled using Stackelberg games \cite{han2012game} in which the defender is the leader and the attacker is the follower. Note that, this hierarchical model of the players is common to many noncooperative Stackelberg games when addressing security problems e.g., \cite{maharjan2013dependable} and \cite{eldosouky2019drones}.

\section{Stackelberg Game Formulation}\label{sec:game}
We formulate a single-leader, single-follower Stackelberg game \cite{han2012game}, between the defender and the attacker. The defender (leader), will act first by choosing $ \boldsymbol{p}^d $ to maximize its payoff. The attacker, having seen the attacker's allocated BPSs, will engage in a noncooperative game by choosing $ \boldsymbol{p}^a $ to maximize its payoff. In fact, the final flow at the ITS is a function of the defender and the attacker's strategies. One suitable concept to find the optimal strategies, and solve the proposed game, for both the attacker and the defender is that of a Stackelberg equilibrium (SE) as defined next.
\begin{definition}
	A strategy profile $ \left(\boldsymbol{p}^{d^*},\boldsymbol{p}^{a^*}\right) $ is a \emph{Stackelberg equilibrium} if it satisfies the following conditions:
	\begin{align}
		u^a_{x}(\boldsymbol{p}^{a^*},\boldsymbol{p}^{d^*}) &\geq u^a_{x}(\boldsymbol{p}^{a},\boldsymbol{p}^{d^*}),\\
		u^d(\boldsymbol{p}^{d^*},\boldsymbol{p}^{a^*}) & =  \max_{\boldsymbol{p}^{d}} u^d(\boldsymbol{p}^{d},\boldsymbol{p}^{a^*}),
	\end{align}
	where $ x $ can be $ \mathcal{G} $, $ \mathcal{B} $, or $ \mathcal{L} $ depending on the desired level of stealthiness.
\end{definition}

According to this definition, the defender needs to choose a strategy that maximizes its outcome based on the attacker's optimal response. Therefore, in order to find the SE we can proceed by backward induction. First, we need to derive the values of $ \boldsymbol{p}^a $ that maximize the attacker's payoff. Then, we will find the defender's strategy at SE by plugging in the attacker's maximizer strategy into the defender's payoff function and finding the maximizer strategy of the defender. In what follows, we derive the SE for the three stealthiness levels.

\subsection{Stealthiness at the Power Source Level}
In this case, the attacker's problem of finding the values of $ \boldsymbol{p}^a $ that maximize its payoff can be formulated according to \eqref{eq:attpayoffgen} as follows:
\begin{align}\label{eq:maximizerpslevel}
	&\max_{\boldsymbol{p}^a} \sum_{g \in \mathcal{G}}\sum_{b \in \mathcal{B}}z_bp^a_{gb} \left(1- \frac{\sum_{b \in \mathcal{B}}p^a_{gb}}{\sum_{b \in \mathcal{B}}t^g_b p^t_b}\right) - \sum_{b \in \mathcal{B}} p_b^d.\\
	&\text{s.t.} \sum_{b \in \mathcal{B}}p^a_{gb} \leq\sum_{b \in \mathcal{B}}t^g_b p^t_b\,\, \forall g \in \mathcal{G}.\label{cond10}
\end{align}

We can assume that the power loss at the generator is equally distributed at every power line connected to it. Thus, we can find the SE by considering the values for $ p^a_{gb} $ are equal for every $ g \in \mathcal{G}$. The following theorem derives the SE for this case.
\begin{theorem}\label{Theorem:genstealthy}
	The attacker's strategy at the SE is: $ p^{a^*}_{gb} = \frac{\sum_{b \in \mathcal{B}_g}t^g_b p^t_b}{2|\mathcal{B}_g|}  $, where $ |\mathcal{B}_g| $ is the number of BSs which are connected to the power source $ g $ and the defender's strategy at the SE is the solution of the following linear program:
	\begin{align}\label{eq:linproggen}
		\max_{\boldsymbol{p}^d} &\,\,z_b p_b^d,\\
		\text{s.t.}& \,\,\sum_{b \in \mathcal{B}} p_b^d \leq P^d,\\
		 &\,\, 0 \leq p_b^d \leq \sum_{g \in \mathcal{G}} p^{a^*}_{gb}, \,\, \forall b \in \mathcal{B}.\label{eq:cond11}
	\end{align}
\end{theorem}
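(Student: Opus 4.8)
The plan is to follow exactly the backward‑induction scheme dictated by the SE definition: first compute the follower's (attacker's) best response to an arbitrary leader strategy $\boldsymbol p^d$, and then substitute it into the leader's (defender's) payoff and optimize. For Step~1, note that in \eqref{eq:attpayoffgen} the term $-\sum_{b}z_b p_b^d$ does not involve $\boldsymbol p^a$, so the attacker's maximizer is independent of $\boldsymbol p^d$ and it suffices to maximize $\sum_{g}\sum_{b}z_b p^a_{gb}\!\left(1-\frac{\sum_b p^a_{gb}}{\sum_b t^g_b p^t_b}\right)$ over the constraint set \eqref{cond10}. Both this objective and \eqref{cond10} decouple across generators $g$, so the problem reduces to maximizing, for each $g$, $\big(1-D_g/S_g\big)\sum_{b\in\mathcal B_g}z_b p^a_{gb}$ with $D_g\triangleq\sum_{b\in\mathcal B_g}p^a_{gb}\le S_g\triangleq\sum_{b\in\mathcal B_g}t^g_b p^t_b$ and $p^a_{gb}\ge 0$. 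Invoking the stated modeling assumption that a generator's power loss is spread equally over its outgoing lines, i.e. $p^a_{gb}=p^a_g$ for all $b\in\mathcal B_g$, the per‑generator objective collapses to $f_g(p^a_g)=\big(\sum_{b\in\mathcal B_g}z_b\big)\,p^a_g\big(1-\tfrac{|\mathcal B_g|p^a_g}{S_g}\big)$.

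For Step~2, $f_g$ is a strictly concave quadratic in $p^a_g$ (the coefficient of $(p^a_g)^2$ is $-\frac{|\mathcal B_g|}{S_g}\sum_{b\in\mathcal B_g}z_b<0$), so its unique unconstrained maximizer is obtained from the first‑order condition, giving $p^a_g=S_g/(2|\mathcal B_g|)$. One then verifies feasibility with respect to \eqref{cond10}: the induced total deviation at $g$ is $|\mathcal B_g|\cdot S_g/(2|\mathcal B_g|)=S_g/2\le S_g$, so the unconstrained optimum lies strictly inside the feasible region and hence is also the constrained optimum. This yields $p^{a^*}_{gb}=\frac{\sum_{b\in\mathcal B_g}t^g_b p^t_b}{2|\mathcal B_g|}$, the claimed attacker strategy. (The degenerate subcase $\sum_{b\in\mathcal B_g}z_b=0$ is trivial since then $f_g\equiv 0$ and the stated value is still optimal.)

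For Step~3, substituting $\boldsymbol p^{a^*}$ into the defender payoff \eqref{eq:defenderpayoff} gives $u^d(\boldsymbol p^d,\boldsymbol p^{a^*})=\sum_{b}z_b p_b^d-\sum_{b}z_b\sum_{g}p^{a^*}_{gb}$, whose second term is now a constant; hence maximizing $u^d$ over $\boldsymbol p^d$ is equivalent to maximizing $\sum_{b}z_b p_b^d$. The feasible region consists of the budget constraint $\sum_b p_b^d\le P^d$ together with $0\le p_b^d\le\sum_g p^{a^*}_{gb}$, the upper bound capturing the fact that, by the one‑sided impact map \eqref{eq:impact}, over‑provisioning a BS beyond the realized deviation yields no further reduction in flow deviation. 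This is precisely the linear program \eqref{eq:linproggen}–\eqref{eq:cond11}; it is feasible (take $\boldsymbol p^d=\boldsymbol 0$) and its feasible set is compact, so a maximizer exists and constitutes the defender's SE strategy, completing the characterization.

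The main obstacle is conceptual rather than computational and lies in Step~1: the problem is separable and concave only after the equal‑distribution restriction is imposed, since without it the attacker would, for any fixed per‑generator deviation budget $D_g$, concentrate the entire deviation on the single BS of largest $z_b$ in $\mathcal B_g$, changing the optimizer entirely. I would therefore state explicitly that the SE is characterized within this physically motivated restricted strategy class, exactly as the text introduces it. A secondary subtlety worth flagging in the write‑up is the justification of the one‑sided bound $p_b^d\le\sum_g p^{a^*}_{gb}$ in the defender's LP: it does not follow from reading \eqref{eq:defenderpayoff} literally (which would be monotone increasing in $p_b^d$) but from the domain restriction $x\in[0,p_b^t-p_b^o]$ in \eqref{eq:impact}, i.e. negative ``deviations'' (surplus power) do not improve the flow below its optimal value.
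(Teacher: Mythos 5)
Your proposal is correct and follows essentially the same route as the paper's proof: backward induction, imposing the equal-distribution restriction $p^a_{gb}=p^a_g$, solving the first-order condition to obtain $p^{a^*}_{g}=\frac{\sum_{b\in\mathcal{B}_g}t^g_b p^t_b}{2|\mathcal{B}_g|}$, verifying feasibility against \eqref{cond10}, and then reducing the defender's problem to the linear program \eqref{eq:linproggen}--\eqref{eq:cond11}. Your added checks (concavity, the degenerate case $\sum_b z_b=0$, and the explicit justification of the upper bound in \eqref{eq:cond11}) are refinements of, not departures from, the paper's argument.
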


\begin{proof}
	Considering equal values for $ p^a_{gb} \geq p^a_g ,\forall b \in \mathcal{B}_g $ for every power source $ g $, we can rewrite \eqref{eq:maximizerpslevel} as:
	\begin{align}
		\max_{\boldsymbol{p}^a} \sum_{g \in \mathcal{G}}\left(p^a_{g}\sum_{b \in \mathcal{B}}z_b\right) \left(1- \frac{p^a_{g}|\mathcal{B}_g|}{\sum_{b \in \mathcal{B}}t^g_b p^t_b}\right) - \sum_{b \in \mathcal{B}} p_b^d.\label{eq:simplifiedgen}
	\end{align} 
	
We can solve this maximization problem by taking the partial derivative of \eqref{eq:simplifiedgen} with respect to $ p^a_g, \,\, \forall g \in \mathcal{G}$ and setting it to 0. Then we have:
\begin{equation*}
\frac{\partial  \sum_{g \in \mathcal{G}}\left(p^a_{g}\sum_{b \in \mathcal{B}}z_b\right) \left(1- \frac{p^a_{g}|\mathcal{B}_g|}{\sum_{b \in \mathcal{B}}t^g_b p^t_b}\right)}{\partial p^a_g} - \frac{ \partial \sum_{b \in \mathcal{B}} p_b^d}{\partial p^a_g} = 0.
\end{equation*}
which is equivalent to:
\begin{equation*}
\frac{\partial  \sum_{g \in \mathcal{G}}\left(p^a_{g}\sum_{b \in \mathcal{B}}z_b\right) \left(1- \frac{p^a_{g}|\mathcal{B}_g|}{\sum_{b \in \mathcal{B}}t^g_b p^t_b}\right)}{\partial p^a_g} = 0.
\end{equation*}
taking the derivatives of the individual terms:
\begin{equation*}
 \sum_{b \in \mathcal{B}}z_b \left(1- \frac{p^a_{g}|\mathcal{B}_g|}{\sum_{b \in \mathcal{B}}t^g_b p^t_b}\right) +  \left(p^a_{g}\sum_{b \in \mathcal{B}}z_b\right) \left(- \frac{|\mathcal{B}_g|}{\sum_{b \in \mathcal{B}}t^g_b p^t_b}\right)= 0.
\end{equation*}
then we have:
\begin{equation*}
\left(1- \frac{p^a_{g}|\mathcal{B}_g|}{\sum_{b \in \mathcal{B}}t^g_b p^t_b}\right) =  p^a_{g} \left( \frac{|\mathcal{B}_g|}{\sum_{b \in \mathcal{B}}t^g_b p^t_b}\right).
\end{equation*}
from which we can get the optimal strategy as:
\begin{equation*}
 p^{a^*}_{g} = \frac{\sum_{b \in \mathcal{B}_g}t^g_b p^t_b}{2|\mathcal{B}_g|}  
\end{equation*}

We note that this solution satisfies \eqref{cond10}, so it represents a valid solution to the problem in \eqref{eq:maximizerpslevel}.

Next, in order to find the defender's strategy at the SE, we first plug in the strategy of the attacker at the SE into \eqref{eq:defenderpayoff} as follows:
\begin{align}\label{eq:pluggedindef}
u^d(\boldsymbol{p}^d,\boldsymbol{p}^{a^*}) = -\sum_{b \in \mathcal{B}}z_b \left(\sum_{g \in \mathcal{G}}p^{a^*}_{gb} - p_b^d\right).
\end{align}

Since the first term in \eqref{eq:pluggedindef} does not depend on $ \boldsymbol{p}^d $, then the defender's problem simplifies to \eqref{eq:linproggen}. However, since the allocated BPS at every BS cannot exceed the attacker's deviation at the same BS, we add the constraints in (\ref{eq:cond11}) which must be satisfied by the defender.
\end{proof}

Theorem \ref{Theorem:genstealthy} shows that there exists only one SE for the case in which the attacker wants to stay stealthy at the power source level since the linear program in \eqref{eq:linproggen} will yield only one solution. To solve \eqref{eq:linproggen}, we can use known techniques such as the simplex method \cite{boyd2004convex}.

\subsection{Stealthiness at the Power Line Level}
In this case, the attacker's problem can be formulated according to \eqref{eq:attpayoffline} as follows:
\begin{align}\label{eq:maxlin}
	&\max_{\boldsymbol{p}^a} \sum_{b \in \mathcal{B}}z_b \left(\sum_{g \in \mathcal{G}}p^a_{gb} \left(1-\frac{p^a_{gb}}{t^g_b p^t_b}\right) - p_b^d\right)\\
	&\text{s.t.} \,\, p^a_{gb} \leq t^g_b p^t_b, \forall g \in \mathcal{G},\& \,\,  b \in \mathcal{B}_g.
\end{align}

Next, we derive the optimal strategy of the attacker that maximizes its payoff, plug it into the defender's strategy, and, then, derive the maximizer strategy of the defender, similar to the case of power source level.

\begin{theorem}\label{Theorem:LineStealthy}
At the power line level, the SE of the game occurs when the attacker plays the strategy is: $ p_{gb}^{a^*} = \frac{t_b^gp_b^t}{2} $ and the defender plays the strategy given by the solution of the following linear program:
	\begin{align}\label{eq:linproglin}
		\max_{\boldsymbol{p}^d} &\,\,z_b p_b^d,\\
		\text{s.t.}& \,\,\sum_{b \in \mathcal{B}} p_b^d \leq P^d,\\
		&\,\, 0 \leq p_b^d \leq \sum_{g \in \mathcal{G}} \frac{t_b^gp_b^t}{2}, \,\, \forall b \in \mathcal{B}.\label{eq:cond21}
	\end{align}
\end{theorem}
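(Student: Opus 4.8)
The plan is to replicate, for the power-line level, the backward-induction scheme already used for Theorem~\ref{Theorem:genstealthy}: first solve the attacker's (follower's) problem \eqref{eq:maxlin}, then substitute the resulting maximizer into the defender's payoff \eqref{eq:defenderpayoff} and show that the leader's problem collapses to the linear program \eqref{eq:linproglin}.

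The key structural observation, and the reason no ``equal-value'' assumption is needed here, is that the attacker's objective in \eqref{eq:maxlin} is \emph{separable} across the pairs $(g,b)$: the term $-\sum_{b\in\mathcal{B}} z_b p_b^d$ is a constant with respect to $\boldsymbol{p}^a$, and the remaining part equals $\sum_{b\in\mathcal{B}}\sum_{g\in\mathcal{G}} f_{gb}(p^a_{gb})$ with $f_{gb}(p^a_{gb}) \triangleq z_b p^a_{gb}\bigl(1-\frac{p^a_{gb}}{t^g_b p^t_b}\bigr)$, each summand depending on a single decision variable. I would therefore maximize each $f_{gb}$ independently over its box constraint $0\le p^a_{gb}\le t^g_b p^t_b$. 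Since $z_b\ge 0$ and $t^g_b p^t_b>0$, each $f_{gb}$ is a concave (downward) parabola, so its second derivative $-2z_b/(t^g_b p^t_b)$ is nonpositive and the first-order condition $f_{gb}'(p^a_{gb})=z_b\bigl(1-\frac{2p^a_{gb}}{t^g_b p^t_b}\bigr)=0$ yields the maximizer $p^{a^*}_{gb}=\frac{t^g_b p^t_b}{2}$ (for $z_b=0$ the summand is identically zero and any feasible value is optimal). As this point lies in the interior of $[0,t^g_b p^t_b]$, the stealthiness constraint $p^a_{gb}\le t^g_b p^t_b$ is inactive, and $p^{a^*}_{gb}=\frac{t^g_b p^t_b}{2}$ is the constrained maximizer of the attacker's problem.

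For the leader's step, I would plug $p^{a^*}_{gb}=\frac{t^g_b p^t_b}{2}$ into \eqref{eq:defenderpayoff}, obtaining $u^d(\boldsymbol{p}^d,\boldsymbol{p}^{a^*})=-\sum_{b\in\mathcal{B}} z_b\bigl(\sum_{g\in\mathcal{G}}\frac{t^g_b p^t_b}{2}-p_b^d\bigr)$. The first sum is independent of $\boldsymbol{p}^d$, so maximizing $u^d$ is equivalent to maximizing $\sum_{b\in\mathcal{B}} z_b p_b^d$; adjoining the budget constraint $\sum_{b\in\mathcal{B}} p_b^d\le P^d$ and the physical cap that the compensation at a BS cannot exceed the total deviation injected there, $0\le p_b^d\le\sum_{g\in\mathcal{G}}\frac{t^g_b p^t_b}{2}$ (i.e., \eqref{eq:cond21}), reproduces exactly the linear program \eqref{eq:linproglin}, which as before can be solved by the simplex method.

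I do not anticipate a serious obstacle; the two points requiring care are (i) justifying that the stationary point is a maximum rather than merely a critical point, which follows from the strict concavity of each $f_{gb}$ when $z_b>0$, and (ii) verifying feasibility of the candidate so that no boundary/KKT cases arise, which holds since $\frac{t^g_b p^t_b}{2}\le t^g_b p^t_b$. As in Theorem~\ref{Theorem:genstealthy}, this yields a unique SE outcome.
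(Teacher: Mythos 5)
Your proposal is correct and follows essentially the same route as the paper's proof: take the first-order condition of the attacker's objective in \eqref{eq:maxlin} to obtain $p_{gb}^{a^*}=\frac{t_b^g p_b^t}{2}$, then substitute into \eqref{eq:defenderpayoff} and observe that the defender's problem reduces to the linear program \eqref{eq:linproglin} with the cap \eqref{eq:cond21}. Your version merely adds detail the paper leaves implicit (separability of the objective across $(g,b)$ pairs, concavity via the second derivative, and interior feasibility of the stationary point), which strengthens rather than changes the argument.
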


\begin{proof}
	The proof follows a similar procedure to the proof of Theorem \ref{Theorem:genstealthy}, where we first take the partial derivative of \eqref{eq:maxlin} with respect to $ p_{gb}^a$ and setting it equal to zero. Then we can find the maximizer strategy of the attacker which is $ p_{gb}^{a^*} = \frac{t_b^gp_b^t}{2} $.
	
	By plugging this value into the payoff function of the defender in \eqref{eq:defenderpayoff}, we end up having the linear program in \eqref{eq:linproglin} and the condition in \eqref{eq:cond21} comes from the fact that the allocated BPSs at every BS cannot exceed the power deviation caused by the attacker at each BS.
\end{proof}

Theorem \ref{Theorem:LineStealthy} shows that, our game will admit a unique SE for the case in which the attacker tries to stay stealthy at the power line level. 

\subsection{Stealthiness at the BS Level}
In this case, the attacker's problem can be formulated according to \eqref{eq:attpayoffbs} as follows:
\begin{align}\label{eq:maximizerbs}
	&\max_{\boldsymbol{p}^a} \sum_{b \in \mathcal{B}}z_b \left(\sum_{g \in \mathcal{G}}p^a_{gb} - p_b^d\right)\left(1-\frac{\sum_{g \in \mathcal{G}}p^a_{gb}}{p^t_b - p^o_b}\right)\\
	& \text{s.t.}\,\, \sum_{g \in \mathcal{G}}p^a_{gb}\leq p^t_b - p^o_b, \,\, \forall b \in \mathcal{B}.
\end{align}
The following theorem derives the SE for this case.

\begin{theorem}\label{Theorem:BSStealthy}
	At the SE, the defender's strategy is the solution of the following linear program:
	\begin{align}\label{eq:linprogbs}
	\max_{\boldsymbol{p}^d} &\,\,z_b p_b^d,\\
	\text{s.t.}& \,\,\sum_{b \in \mathcal{B}} p_b^d \leq P^d,\\
	&\,\, 0 \leq p_b^d \leq \sum_{g \in \mathcal{G}} \frac{p_b^t-p_b^o}{2}, \,\, \forall b \in \mathcal{B},\label{eq:cond31}
	\end{align}
	while the attacker's SE strategy is any $ \boldsymbol{p}^{a^*} $, such that $ \sum_{g \in \mathcal{G}}p^a_{gb} = \frac{p_b^{d^*} + p_b^t-p_b^o}{2}, \,\, \forall b \in \mathcal{B} $.
\end{theorem}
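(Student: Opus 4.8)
The plan is to reuse the backward-induction scheme of Theorems~\ref{Theorem:genstealthy} and~\ref{Theorem:LineStealthy}: first solve the follower's (attacker's) problem \eqref{eq:maximizerbs} for an arbitrary fixed defender allocation $\boldsymbol{p}^d$, and then plug the resulting best response into the defender's payoff \eqref{eq:defenderpayoff} and maximize over $\boldsymbol{p}^d$.

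\textbf{Step 1: the attacker's best response.} The key observation is that the objective in \eqref{eq:maximizerbs} depends on $\boldsymbol{p}^a$ only through the per-BS aggregate deviations $P_b \triangleq \sum_{g \in \mathcal{G}} p^a_{gb}$, and that this objective together with the constraint $\sum_{g\in\mathcal{G}} p^a_{gb} \le p_b^t - p_b^o$ both separate across $b \in \mathcal{B}$. Hence the attacker's problem reduces, BS by BS, to maximizing the scalar concave quadratic
\begin{equation*}
f_b(P_b) = z_b\,(P_b - p_b^d)\Bigl(1 - \tfrac{P_b}{p_b^t - p_b^o}\Bigr), \qquad 0 \le P_b \le p_b^t - p_b^o .
\end{equation*}
Since the coefficient of $P_b^2$ is $-z_b/(p_b^t-p_b^o) < 0$, the function is strictly concave, so the stationary point of $f_b$ is the unique global maximizer; I would solve $f_b'(P_b)=0$ to get $P_b^{*} = \tfrac12\bigl(p_b^t - p_b^o + p_b^d\bigr)$, and then check feasibility: $P_b^{*} \ge 0$ holds because $p_b^d \ge 0$, and $P_b^{*} \le p_b^t - p_b^o$ reduces to $p_b^d \le p_b^t - p_b^o$, the per-BS upper bound recorded in \eqref{eq:cond31}. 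Because only the sum $\sum_g p^a_{gb}$ is pinned down, every nonnegative split of $P_b^{*}$ among the generators is equally optimal, which is exactly the stated form of $\boldsymbol{p}^{a^*}$.

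\textbf{Step 2: the defender's problem.} Substituting $\sum_{g} p^{a^*}_{gb} = \tfrac12(p_b^t - p_b^o + p_b^d)$ into \eqref{eq:defenderpayoff} yields
\begin{equation*}
u^d(\boldsymbol{p}^d,\boldsymbol{p}^{a^*}) = -\sum_{b\in\mathcal{B}} z_b\Bigl(\tfrac12(p_b^t-p_b^o+p_b^d) - p_b^d\Bigr) = -\tfrac12\sum_{b\in\mathcal{B}} z_b(p_b^t-p_b^o) + \tfrac12\sum_{b\in\mathcal{B}} z_b p_b^d .
\end{equation*}
The first sum does not depend on $\boldsymbol{p}^d$, so maximizing $u^d$ is equivalent to maximizing $\sum_{b\in\mathcal{B}} z_b p_b^d$ subject to the budget $\sum_{b} p_b^d \le P^d$ and the box constraints of \eqref{eq:cond31}, where the per-BS upper bound is precisely what Step 1 requires for the attacker's maximizer to remain feasible. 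This is the linear program \eqref{eq:linprogbs}, which completes the characterization of the SE.

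\textbf{Main obstacle.} The delicate point is the coupling between the two levels: unlike a static game, the attacker's equilibrium deviation at each BS is an affine function of $p_b^d$, so when substituting back one must carry this dependence through, and the cancellation that collapses the defender's payoff to the clean linear objective $\tfrac12\sum_b z_b p_b^d$ hinges on it. The remaining work is bookkeeping — confirming $P_b^{*}\in[0,\,p_b^t-p_b^o]$ for every admissible $\boldsymbol{p}^d$ and propagating the per-BS bound into the defender's program so that the SE is internally consistent — which amounts to one-variable calculus plus standard linear programming.
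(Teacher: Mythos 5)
Your proposal is correct and follows essentially the same backward-induction route as the paper: aggregate the attack into $p_b^a = \sum_{g}p^a_{gb}$, solve the resulting per-BS concave quadratic to get $P_b^* = \tfrac12(p_b^t-p_b^o+p_b^d)$, and substitute back so the defender's problem collapses to the linear program \eqref{eq:linprogbs}. You actually supply slightly more detail than the paper (explicit concavity and feasibility checks, and the observation that the attacker's best response only pins down the per-BS sum, yielding infinitely many equilibria), so no changes are needed.
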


\begin{proof}
	In order to find the maximizer strategy of the attacker, we define $ p_b^a \triangleq \sum_{g \in \mathcal{G}}p^a_{gb} $. In this case, the attacker's problem simplifies to:
	\begin{align}\label{eq:maximizerbssimple}
		&\max_{\boldsymbol{p}^a} \sum_{b \in \mathcal{B}}z_b \left(p^a_{b} - p_b^d\right)\left(1-\frac{p^a_{b}}{p^t_b - p^o_b}\right)\\
		& \text{s.t.}\,\, p^a_{b}\leq p^t_b - p^o_b, \,\, \forall b \in \mathcal{B}.
	\end{align}
	
	We can now take the first order partial derivative of \eqref{eq:maximizerbssimple} with respect to $ p_b^a  $ and set it equal to 0, similar to Theorem \ref{Theorem:genstealthy}. Then, the maximizer value for the attacker is derived as $ \frac{p_b^{d} + p_b^t-p_b^o}{2}, \,\, \forall b \in \mathcal{B} $.
	
	By substituting this value into the defender's payoff function, we end up having the linear program in \eqref{eq:linprogbs}. However, here, we note that $ p^a_{b}\leq p^t_b - p^o_b $ should be satisfied which yields to the condition in \eqref{eq:cond31}. This linear program has a unique solution, however, since any solution of $  p_b^a \triangleq \sum_{g \in \mathcal{G}}p^{a^*}_{gb} $ is an SE, thus, there exists infinitely many SEs for this case.
\end{proof}

Theorem \ref{Theorem:BSStealthy} shows that, for the case in which the attacker tries to stay stealthy at the BS level, there are infinitely many SEs for the game. However, all of these strategies will yield the same payoff for the defender and the attacker.

Finally, we can see that all the derived SEs are the solutions of linear programs. Therefore, the proposed game model and its solution(s) can be solved in linear time with respect to the size of the ICI. This property makes the proposed solution scalable and, hence, it is applicable to large-scale ICIs.

\section{Simulation Results and Analysis}\label{sec:sim}
For our simulations, we consider a grid model for the ITS in which the lengths of streets and the numbers of intersections in the x and y directions are equal. We consider hexagonal models for the communication network's cells and we spread a number of power sources in the studied world where their locations are drawn according to a 2-D uniform distribution. The number of BSs that are connected to the power sources is also drawn according to a uniform distribution. For every simulation, we construct a new flow matrix $ \boldsymbol{A} $ that satisfies the condition in \eqref{eq:intersection}. In addition, we consider that for all of the BSs, $ p_b^t = 2p_b^o $.

First, in Fig. \ref{fig:powdev}, we show the impact of deviation of the power generation of all power sources on the flow deviation of the ITS. We can see that, as the percentage of power reduction increases, the flow deviation linearly increases until when the power is reduced to half of the total power generation. At this point $ p_b^r = p_b^o $, which means that the received power can only activate the BS but is not enough to send control packets to ACVs. That is why after $50\%$ reduction the flow deviation stays constant.
\begin{figure}[t]
	\centering
	\includegraphics[width=\columnwidth]{./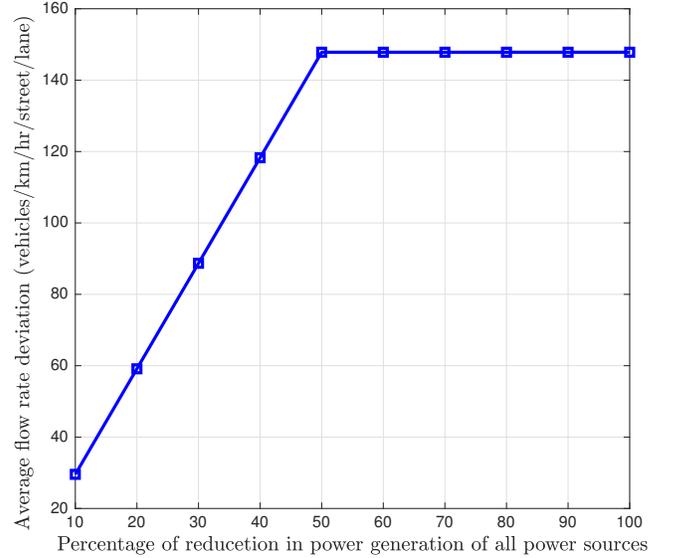}
	\caption{Effect of the power deviation on the ITS flow.}
	\label{fig:powdev}
\end{figure}

\begin{figure}
	\centering
	\begin{subfigure}{\columnwidth}
		\centering
		\includegraphics[width=\columnwidth]{./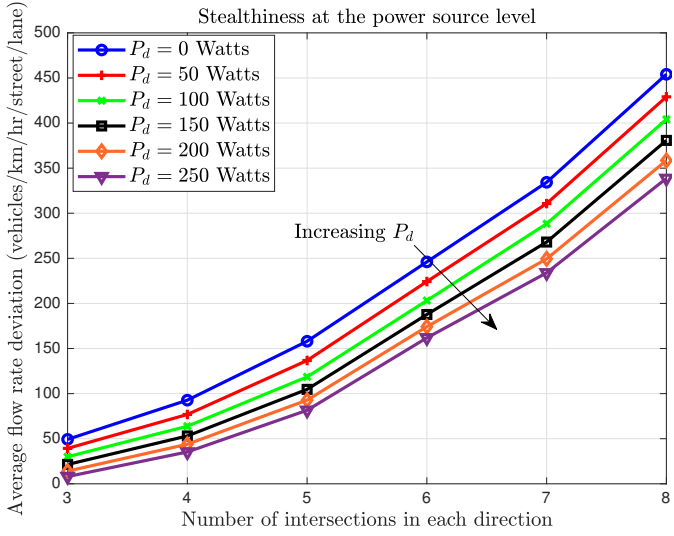}
		\caption{Stealthiness at the power source level.}
		\label{fig:powersource}
	\end{subfigure}\\
	\begin{subfigure}{\columnwidth}
		\centering
		\includegraphics[width=\columnwidth]{./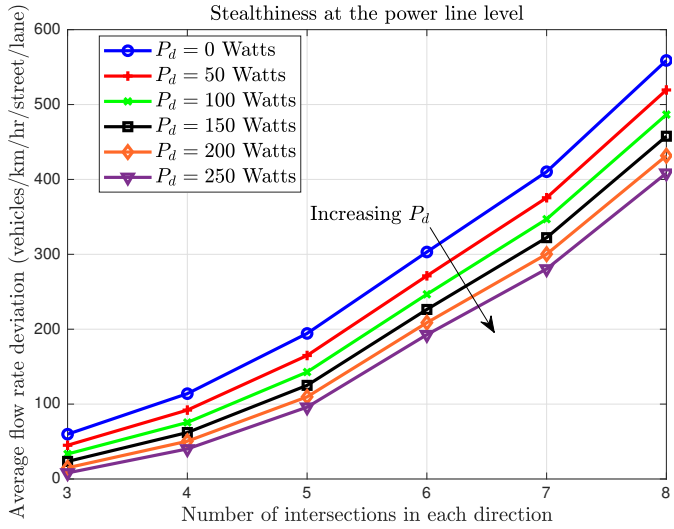}
		\caption{Stealthiness at the power line level.}
		\label{fig:powerline}
	\end{subfigure}\\
	\begin{subfigure}{\columnwidth}
		\centering
		\includegraphics[width=\columnwidth]{./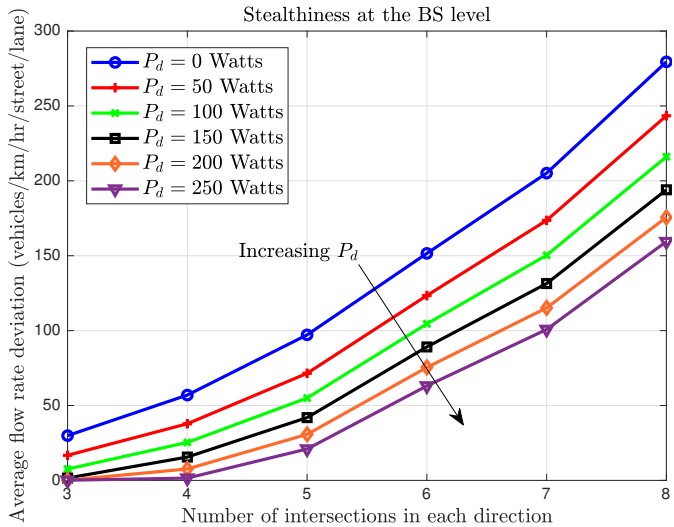}
		\caption{Stealthiness at the BS level.}
		\label{fig:bs}
	\end{subfigure}
	\caption{Effect of the ITS scale on the traffic flow deviation for different levels of stealthiness.}
	\label{fig:scale}
\end{figure}

Fig. \ref{fig:scale} illustrates the effect of the scale of the ITS on the SE strategies and the flow deviation. In this scenario, we keep both the number of power sources and the radius of cells constant. However, the number of the cells will increase as the scale of the ITS grows.
In Fig. \ref{fig:scale}, we also compare the three types of stealthiness level that are power source level (Fig. \ref{fig:powersource}), power line level (Fig. \ref{fig:powerline}), and BS level (Fig. \ref{fig:bs}). First, we can see that in all the cases, the increase in the scale of the ITS, represented by the number of the intersections in each direction, will increase the interdependence of the ITS on the CI thus causing higher flow deviations on the ITS. The upper line in each figure represents the case of zero $P_d$, i.e., no BPSs are used, and hence, the highest deviation. As the values of the compensated power increase, we can see a reduction in the flow deviation, in all the cases.

From Fig. \ref{fig:scale}, we can also see that on average, power line level stealthiness, in Fig. \ref{fig:powerline}, can cause higher flow deviation than power source level stealthiness, Fig. \ref{fig:powersource}, and, than, BS level stealthiness, Fig. \ref{fig:bs}. That means the interdependent PG-CI-ITS infrastructure is more sensitive to power line attacks than the other types of attacks. We can also in Fig. \ref{fig:scale}, that compensating the power by using BPSs at the BS level has higher positive effect on reducing the deviations in the flow. This can be seen from Fig. \ref{fig:bs} as the gap between the lines representing flow deviations due to additional $P_d$ are wider than that in Figs. \ref{fig:powersource} and \ref{fig:powerline}. Therefore, at the value of $P_d = 250 $ watts, for instance, the flow deviation is the minimum for all number of intersections in Fig. \ref{fig:bs}.

Fig. \ref{fig:radius} shows the effect of the cell coverage radius on the final flow deviation. From Fig. \ref{fig:radius}, we can see that, in general, an increase in the cell size reduces the flow rate deviation. However, for the case in which the cell radius is 1.2 km there is a small increase in the average flow rate deviation, for most of the cases, compared to when the cell radius is 1.1 km.
This is because, for this value, the increase in the radius means that fewer BSs are needed to cover the ITS. This, in turn, increases the portion of the ITS that is covered by each BS allowing the failures of the BSs to cause a higher deviation in the average flow rate of the ITS. However, this increase is limited due to the fact that each BS becomes connected to more generators which makes the BSs less vulnerable to power deviations. Note that, in Fig. \ref{fig:radius}, increasing the cell size over 1.2 km does not reduce the number of BSs, and thus, the flow rate deviation returned to normal behavior as before the value of 1.2 km.

\begin{figure}[t]
	\centering
	\includegraphics[width=\columnwidth]{./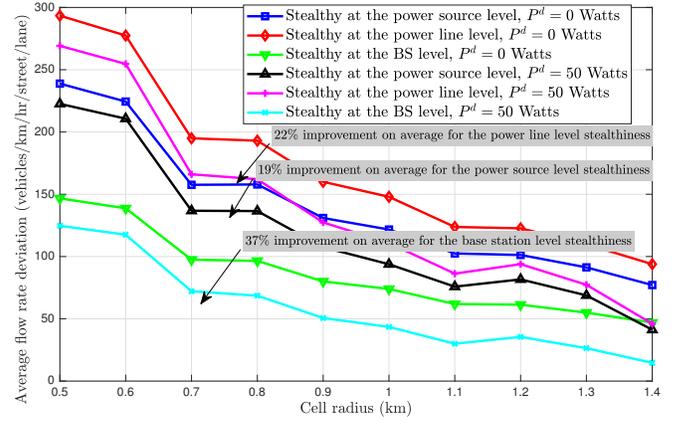}
	\caption{Effect of the cell radius on the power flow deviation.}
	\label{fig:radius}
\end{figure}

Moreover, Fig. \ref{fig:radius} shows that our proposed Stackelberg allocation strategy can reduce the flow deviation by 19\%, 22\%, and 37\% for stealthiness levels at power source, power line, and BS respectively, when the value of $P_d=50$ watts. This is represented by the gap between the lines representing the same stealthiness level when the BPSs are allocated per the optimal SE equilibrium in Section \ref{sec:game}.

\begin{figure}[t]
	\centering
	\includegraphics[width=\columnwidth]{./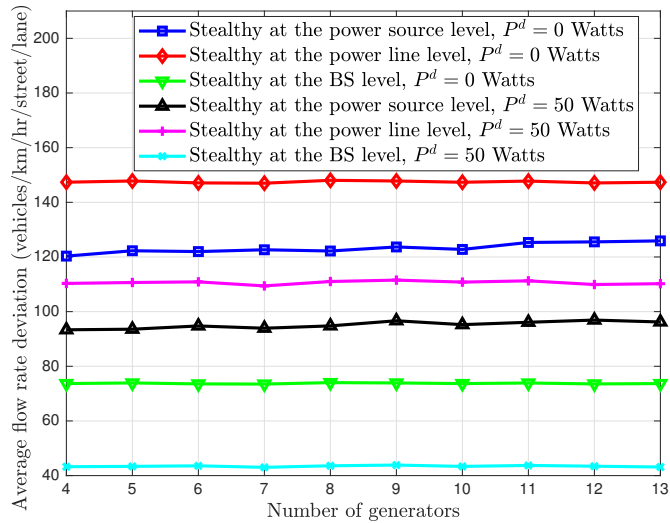}
	\caption{Attack on all of the power sources.}
	\label{fig:num_gen_all}
\end{figure}

\begin{figure}[t]
	\centering
	\includegraphics[width=\columnwidth]{./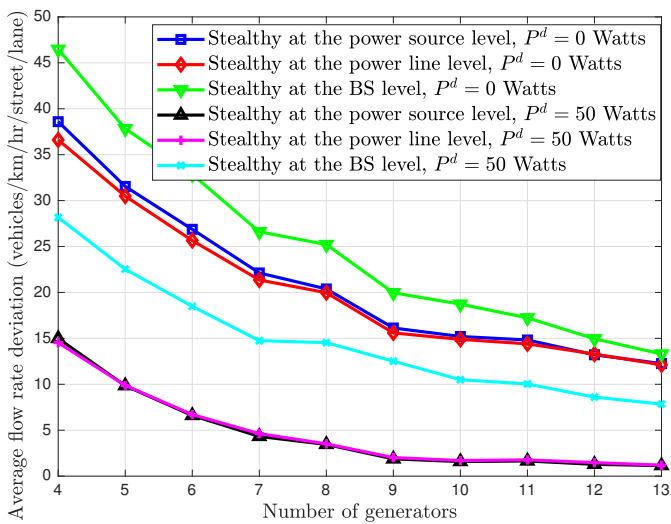}
	\caption{Attack on only one power source.}
	\label{fig:num_gen_single}
\end{figure}

Figs. \ref{fig:num_gen_all} and \ref{fig:num_gen_single} show the effect of the number of power sources on the flow deviation. In Fig. \ref{fig:num_gen_all}, we study the case in which the attacker performs an attack on all of the power sources, while in Fig. \ref{fig:num_gen_single}, we show the case in which the attacker performs its attack on only one of the power sources. From Fig. \ref{fig:num_gen_all}, we observe that the flow deviation does not change as the number of generators increase. This due to the fact that, by adding more power sources, the attacker can still attack the newly added power sources and, thus, the new power sources would not compensate the power loss at the BSs. Note that, this case might be unrealistic in the sense that even powerful attackers might not have the capacity to attack all the power sources in the PG. However, we show this case to highlight the difference when the attacker attack just few generators. In particular, we consider the case of attacking just one generator in Fig. \ref{fig:num_gen_single}. From Fig. \ref{fig:num_gen_single} we can see that, when the number of power sources increases, the flow deviation becomes smaller, when the number of the generators increase. This is because the ITS can now benefit from the added power sources as they are not targeted by the attacker.

Finally, in Fig. \ref{fig:comp}, we compare our proposed SE allocation with an allocation strategy in which the BPSs are equally distributed at every BS. We can see from Fig. \ref{fig:comp} that our proposed SE can reduce the flow deviation by up to 40\% compared to the case of equal allocation. Moreover, when the attacker is not stealthy, the equal BPS distribution cannot defend the ITS against the attacker compared to the SE allocation which is shown to be effective in reducing the flow deviation. The other stealthiness levels exhibit similar behavior where the SE allocations outperform the equal allocations in all cases. We also can see from Fig. \ref{fig:comp} that when the attacker's goal is to stay stealthy at the power lines, it can cause higher flow deviations compared to the other two stealthiness levels, which corroborates the results in Fig. \ref{fig:scale}.

\begin{figure}[t]
	\centering
	\includegraphics[width=\columnwidth]{./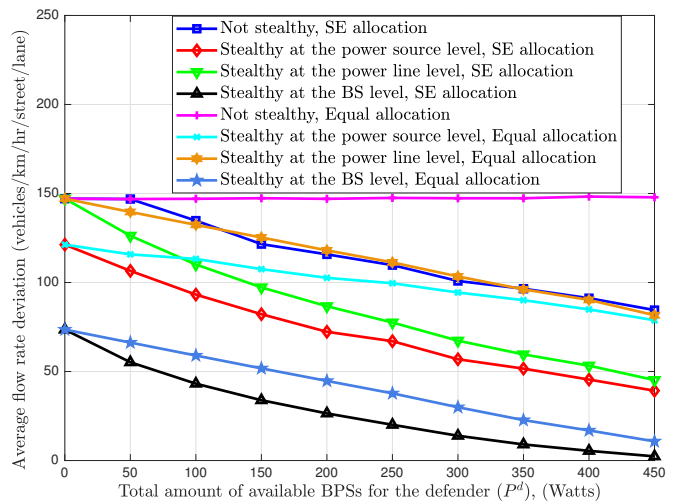}
	\caption{Comparison between the SE and equal allocations.}
	\label{fig:comp}
\end{figure}

\section{Conclusion}\label{sec:conc}

In this paper, we have studied the security of interdependent PG-CI-ITS infrastructure. We have modeled each infrastructure where it was shown that there is a strong interdependence between the ITS, the CI, and the PG infrastructure. Using these individual models, we have derived a rigorous one-to-one interdependence relation that can map the effect of power loss at the PG components on the ITS traffic flow. Then, we have defined the possible ways in which an attacker can perform a stealthy attack on the interdependent infrastructure. In particular, three levels of attacker stealthiness have been considered that can occur at the power source, the power lines, and at the BS level. In order to defend against any these types of attacks, we have proposed a Stackelberg game to model the interactions between the attacker and the infrastructure administrator, as the defender. We have analytically derived the Stackelberg solution for the different levels of stealthiness for the attacker. This Stackelberg solution can be used by the defender to strategically allocate its available BPSs at every BS. 
We have also shown that the solutions of the proposed game are scalable as they can be reached in linear time with respect to the size of ICI which makes the analysis practical for large-scale ICIs. Results have shown that the proposed Stackelberg allocation outperforms other strategy selection techniques and, in particular, can reduce the flow deviation at ITS up to 40\% compared to an equal BPS allocation strategy.

\def\baselinestretch{1}
\bibliographystyle{IEEEtran}
\bibliography{references}

\end{document}